\newtheorem{theorem}{Theorem}[section]
\newtheorem{lemma}{Lemma}[section]
\newtheorem{remark}{Remark}[section] 
\long\def\symbolfootnote[#1]#2{\begingroup%
\def\thefootnote{\fnsymbol{footnote}}\footnote[#1]{#2}\endgroup}
  \def\A{{\cal A}}
\def\BB{{\cal B}} \def\DD{{\cal D}}  
 \def\II{{\cal I}} \def\MM{{\cal M}} 
  \def\C{\mathbb C}
  \def\C{\Sigma}
\def\dref#1{(\ref{#1})}
\def\be{\begin{equation}} \def\ee{\end{equation}}
\def\ba{\begin{array}} \def\ea{\end{array}} \def\bna{\begin{eqnarray}}
\def\ena{\end{eqnarray}}
 \def\NN{{\cal N}} \def\MM{{\cal M}}
\def\C{\mathbb C}
\def\DD{{\cal D}}
\def\SS{{\cal S}}
    \def\JJ{{\cal J}}
\def\DD{{\cal D}} \def\EE{{\cal E}} \def\GG{{\cal G}}
\def\TT{{\cal T}}
\def\II{{\cal I}}
 \def\bna{\begin{eqnarray}}
\def\ena{\end{eqnarray}} \def\dref#1{(\ref{#1})}
\begin{document}
\title{ Omnidirectional Relay in Wireless Networks}

\author{\authorblockN{Liang-Liang Xie} \\
\authorblockA{\small Department of Electrical and Computer Engineering\\
University of Waterloo, Waterloo, ON, Canada N2L 3G1 \\
Email: llxie@ece.uwaterloo.ca} }

\maketitle

\begin{abstract}

For wireless networks with multiple sources, an omnidirectional
relay scheme is developed, where each node can simultaneously relay different
messages in different directions. This is accomplished by the
decode-and-forward relay strategy, with each relay binning the
multiple messages to be transmitted, in the same spirit of network
coding. Specially for the all-source all-cast problem, where each
node is an independent source to be transmitted to all the other
nodes, this scheme completely eliminates interference in the whole
network, and the signal transmitted by any node can be used by any
other node. For networks with some kind of symmetry, assuming no beamforming is to be performed, this omnidirectional relay scheme is capable of achieving the maximum achievable rate.

\end{abstract}

\section{Introduction}

In wireless networking, relay is a way of expanding communication range or increasing communication rate, with the help of other nodes. As such, more nodes are involved and more signals will be transmitted. It is therefore important to design and coordinate these signals to maximize the cooperation and minimize the interference.
Between the two fundamental relay strategies proposed in
\cite{covelg79}, especially, the decode-and-forward strategy enables the destination node to fully enjoy the
transmitted power of both the source node and the relay
node. This is still realizable when multiple relays are introduced
to help the destination \cite{xiekum04,xiekum05,kragasgup05}, and interference can be
completely eliminated for arbitrarily large networks.

However, the situation is much more complicated when there are
multiple sources in the network \cite{xiekum07}. Unlike the case
of a single source where all nodes are essentially transmitting
the same information, multiple sources seem inevitably result in
interference. Nevertheless, studies of the two-way relay channel
\cite{ranwit06,xie07} have indicated the possibility of no interference
even if there are more than one sources.

In this paper, we develop an omnidirectional relay scheme for
wireless networks with multiple sources, where, each node can
simultaneously relay different messages in different directions.
This is accomplished by binning multiple messages at each relay,
as a generalization of the scheme proposed in \cite{xie07}, in the same spirit of network coding \cite{ahlcailiyeu00}.

The basic idea of network coding \cite{ahlcailiyeu00} can be explained with the following example. Suppose that node $A$ wants to send out two bits of information $b_1$ and $b_2$, with $b_1$ to node $B$, and $b_2$ to node $C$. However, if node $B$ already knows $b_2$ and node $C$ already knows $b_1$, then this can be accomplished by just sending out one bit $b_1\oplus b_2$ to both node $B$ and node $C$, since node $B$ can recover $b_1$ by computing $b_2\oplus (b_1\oplus b_2)=b_1$, and node $C$ can recover $b_2$ by computing $b_1\oplus (b_1\oplus b_2)=b_2$.

This scheme can be generalized with the technique of binning \cite{xie07}. Consider the problem that node $A$ wants to send out two messages $w_1$ and $w_2$, with $w_1$ to node $B$, and $w_2$ to node $C$, where, $w_1$ can take $M_1$ different values and $w_2$ can take $M_2$ different values, and possibly, $M_1\neq M_2$. Similarly, assume that node $B$ already knows the true value of $w_2$, and node $C$ already knows the true value of $w_1$. Instead of sending out both the messages $(w_1,w_2)$, which can be any of the $M_1M_2$ different vectors, node $A$ can throw these vectors into $M$ bins, with $M= \max\{M_1,M_2\}$, and send out the index of the bin that contains the true vector. In this way, node $A$ only needs to send out a message with $M$ different values. It can be easily checked that when $M\geq \max\{M_1,M_2\}$, it is possible to bin the $M_1M_2$ different vectors of $(w_1,w_2)$ in such a way that in each bin, no two vectors contain the same $w_1$ or the same $w_2$. Therefore, knowing the true value of $w_2$, and the bin that contains the true value of $(w_1,w_2)$, node $B$ can uniquely determine the true value of $w_1$. Similarly, node $C$ can uniquely determine the true value of $w_2$.

The above binning scheme can be easily generalized to send any number of messages. In the context of wireless relay networks, node $A$ can be a relay that wants to forward different messages to different nodes. With the binning technique, node $A$ only needs to send one signal representing the bin index, from which, different receivers can pick up different messages based on their different {\it a priori}\, knowledge of the messages. Furthermore, with this binning scheme, it is also shown in \cite{xie07} that every receiver can fully exploit all the signal power, as if node $A$ is only sending those messages unknown to it.

Node $A$ can also use other ways to relay multiple messages, e.g., by superposition coding. It can first encode each message individually by a signal, and then superpose them together into a layered signal to transmit. Upon receiving this layered signal, each receiver can pick out the layers that correspond to the unknown messages, by deleting the layers that correspond to the messages already known. Compared to the binning scheme, an obvious drawback of this superposition scheme is that the total transmit power of node $A$ has to be clearly divided among the messages, and each receiver can only exploit the part that is used for its unknown messages. However, this way of clearly layering different messages makes it easier to establish cooperation between different transmitters. For example, to send the same message to a common receiver, beamforming or coherent transmission can be established between two transmitters so that the received power can be boosted. On the other hand, this is not so easy to realize with the binning scheme unless the two transmitters are sending exactly the same set of messages.

Using superposition coding to establish coherent transmission was originally proposed in \cite{covelg79} for the relay channel. It was later extended to the case with multiple relays \cite{xiekum04,xiekum05,kragasgup05}, and to the two-way relay channel \cite{ranwit06}. It can also be applied to a general framework with multiple sources, relays and destinations \cite{xiekum07}. However, the corresponding achievable rate regions become extremely messy for general networks, when there are too many layers of signals to consider. In this paper, we only consider the binning technique in the omnidirectional relay scheme.

As a special application which may be the best to demonstrate the benefit of this binning scheme, we consider the all-source
all-cast problem, where each node is an independent source, to be
sent to all the other nodes. We will show that for such problems,
it is possible to completely eliminate interference in the
network, and each node will enjoy the power transmitted by all the other nodes.

The remainder of the paper is organized as the following. In Section \ref{scheme}, we introduce a general framework of omnidirectional relay with arbitrary source-destination distributions in mind. Starting from Section \ref{asac}, we will focus on the all-source all-cast problem. First, a special version of the omnidirectional relay scheme is developed in Section \ref{asac} for the all-source all-cast problem. Then a key technical lemma is presented in Section \ref{lem}, before we prove some achievability results in Section \ref{netsym}. Finally, some concluding remarks are presented in Section \ref{conclu}.

\section{An Omnidirectional Relay Scheme}
\label{scheme}

Consider a wireless network of $n$ nodes $\NN=\{1,2,\ldots,n\}$.

Consider the following AWGN wireless network
channel model:
\begin{equation}\label{netcha}
Y_j(t)=\sum_{\stackrel{i\in \NN}{i\neq
j}}g_{i,j}X_i(t)+Z_j(t),\quad \quad \forall\, j\in \NN,\quad
t=1,2,\ldots
\end{equation}
where, $X_i(t)\in \C^1$ and $Y_i(t)\in\C^1$ respectively denote
the signals sent and received by Node $i\in\NN$ at time $t$; $\{ g_{i,j} \in
\C^1 : i \neq j \}$ denote the signal attenuation gains; and
$Z_i(t)$ is zero-mean complex Gaussian noise with variance $N$. Note that we are considering a full-duplex model, i.e., nodes can transmit and receive signals at the same time. However, it will be clear that the main results of this paper can be easily extended to half-duplex models.

Consider the networking problem where each node $i\in\NN$ wants to send the same information at rate $R_i$ (can be zero) to all the nodes in a subset $\TT_i\subset\NN$. Or reversely, each node $i\in\NN$ wants to receive the information sent by all the nodes in some subset $\SS_i\subset\NN$. To achieve this, we design an omnidirectional relay scheme as the following.

We choose a sequence of decode-sets and encode-sets for each node in $\NN$ in the following order. First, for each node $i\in \NN$, choose a subset of $\NN\backslash \{i\}$ as its 1-hop decode-set $\DD_{i(1)}$, and then choose a subset of $\DD_{i(1)}$ as its 1-hop encode-set $\EE_{i(1)}$. That is,
$$
\EE_{i(1)}\subseteq \DD_{i(1)}\subseteq \NN\backslash \{i\}.
$$
Then, for each node $i\in \NN$, choose its 2-hop decode-set and encode-set as
\begin{eqnarray*}
\DD_{i(2)}&\subseteq &\NN\backslash \left\{\{i\}\cup\DD_{i(1)}\right\}\\
\EE_{i(2)}&\subseteq & \left\{\DD_{i(1)}\cup\DD_{i(2)}\right\}\backslash \EE_{i(1)}
\end{eqnarray*}
Sequentially, for $k=3,4,\ldots,L$, where $L$ is some selected finite integer, node $i$'s $k$-hop decode-set and encode-set are chosen as
\begin{eqnarray*}
\DD_{i(k)}&\subseteq &\NN\backslash \left\{\{i\}\cup\DD_{i(1)}\cup \cdots \cup \DD_{i(k-1)}\right\} \\
\EE_{i(k)}&\subseteq & \left\{\DD_{i(1)}\cup\cdots\cup\DD_{i(k)}\right\}\backslash \left\{\EE_{i(1)}\cup\cdots\cup \EE_{i(k-1)}\right\}
\end{eqnarray*}

We use block Markov coding. Consider $B$ blocks of equal length, and in each block $b=1,2,\ldots,B$, denote the message of node $i$ by $w_i(b)$, which is encoded at rate $R_i$.

In block 1, each node $i$ transmits
its own message $w_i(1)$. At the end of block 1, each node $i$
decodes the messages sent by the nodes of its 1-hop decode-set, i.e.,
$\{w_j(1):j\in\DD_{i(1)}\}$.

In block 2,
each node $i$ transmits $\{w_i(2),w_{\EE_{i(1)}}(1)\}$ using the
binning technique, where, $w_{\EE_{i(1)}}(1)$ stands
for $\{w_j(1):j\in \EE_{i(1)}\}$. That is, besides its own message $w_i(2)$, node $i$ also helps transmitting the previous-block messages of the nodes in its 1-hop encode-set, which have been decoded by node $i$ since $\EE_{i(1)}\subseteq \DD_{i(1)}$.
At the end of block 2, each node
$i$ decodes the block-2 messages of the nodes in its 1-hop decode-set and the block-1 messages of the nodes in its 2-hop decode-set, i.e.,
$\{w_{\DD_{i(1)}}(2),w_{\DD_{i(2)}}(1)\}$.

Sequentially, in block $b=3,4,\ldots,$ each node $i$
transmits $\{w_i(b),w_{\EE_{i(1)}}(b-1),\ldots,w_{\EE_{i(b-1)}}(1)\}$
using the binning technique, and decodes
$\{w_{\DD_{i(1)}}(b),\ldots,w_{\DD_{i(b)}}(1)\}$ at the end of block
$b$, where, let $\EE_{i(b)}=\DD_{i(b)}=\emptyset$ when $b>L$, and always set $w_{\emptyset}(l)=\emptyset$ for any $l\geq 1$.

To implement the above omnidirectional relay scheme, we can use
regular encoding/sliding-window decoding with random binning at
each node, as has been used in several simple networks in
\cite{xie07}. Note that random binning can be
replaced by deterministic binning that is easier to implement,
although random binning is simpler to describe in the
achievability proof.

In order to successfully carry out the above omnidirectional relay scheme, obviously, the necessary and sufficient condition is that at the end of each block $b=1,2,3,\ldots$, every node $i\in\NN$ can successfully decode $\{w_{\DD_{i(1)}}(b),\ldots,w_{\DD_{i(b)}}(1)\}$. This is essentially a multi-block multiple-access problem, which will be discussed in detail in Section \ref{lem}.

Apparently, the result of successfully carrying out the omnidirectional relay scheme for $B$ blocks, with $B\gg L$ such that $(B-L)/B\approx 1$, is that each node $i$ receives the messages generated by all the nodes in the set $\bigcup_{k=1}^L\DD_{i(k)}$,  approximately at their initial rates. Therefore, the original networking problem is solved as long as $\SS_i\subseteq \bigcup_{k=1}^L\DD_{i(k)}$ for all $i\in \NN$.

Hence, the key step in the design of the omnidirectional relay scheme is the selection of appropriate decode-sets and encode-sets. The sizes of decode-sets are restricted by the decoding requirement, but should be large enough to finally cover all the intended source nodes. Larger encode-sets result in more messages being helped, but may increase the decoding burden to some nodes that may not be interested in all the messages. It is instructive to note that finally, for any node $i$, the signals transmitted by all the nodes in $\bigcup_{k=1}^L\DD_{i(k)}$ are decoded, either as useful messages, or as useless messages but not causing interference, while the signals transmitted by all the nodes in $\NN\backslash \bigcup_{k=1}^L\DD_{i(k)}$ are not decoded, thus causing interference.

\section{The All-source all-cast problem}
\label{asac}

In order to demonstrate the benefit of the omnidirectional relay scheme,
in this paper, we focus on the special networking problem where all the nodes are independent sources and each node wants to send its information to all the other nodes in the network. That is, we consider the special case where $\TT_i=\NN\backslash\{i\}$ for all $i\in \NN$, or equivalently, $\SS_i=\NN\backslash\{i\}$ for all $i\in \NN$.
Naturally, this can be named as the all-source all-cast problem. To simplify the studies, we only address the case where all rates $R_i$ are equal to some common rate $R$.

We make a very general assumption on the signal attenuation. We only assume that longer distance, higher attenuation. That is, there is a non-increasing function to relate the magnitude of the gains in \dref{netcha} to the distance:
\begin{equation} \label{gainmodel}
|g_{i,j}|=g(d_{i,j}),
\end{equation}
where $d_{i,j}$ is the distance between node $i$ and node $j$, and
$g(\cdot)$ is some non-increasing function. For simplicity, we assume the same transmit power constraint $P$ for all the nodes. Therefore, when a node $i$ is transmitting at its full power, the corresponding received power at another node $j$ is $|g_{i,j}|^2P$.

We will show that for the all-source all-cast problem, it is possible to completely eliminate interference in arbitrarily large wireless networks, and each node can make use of the signals transmitted by all the other nodes.
More importantly, we will show the achievability of the following
common rate for the all-source all-cast problem for some network topologies by the
omnidirectional relay scheme:
\begin{equation}\label{achievable}
R<\frac{1}{n-1}\log\left(1+\frac{{\displaystyle\min_{j}}\sum_{i\neq j
}|g_{i,j}|^2P}{N}\right).
\end{equation}

Obviously, $\sum_{i\neq j}|g_{i,j}|^2P$ is the total
received power at node $j$ if the signals transmitted by different nodes didn't add up coherently at the receiver. This will be the case if independent codebooks are used at different nodes. Then,
${\min_{j}}\sum_{i\neq j}|g_{i,j}|^2P$ corresponds to the node
whose total received power is the least. Since every node needs to
decode all the other $n-1$ sources, \dref{achievable} clearly is the highest
common rate $R$ achievable for the
all-source all-cast problem according to the Shannon formula.

It may be possible to achieve higher rates than \dref{achievable}
by using correlated codebooks at different nodes to boost the received power at some nodes, say, by beamforming or coherent transmission. A method is by using superposition coding as mentioned in the Introduction. However, this may be hard to implement in practice due to, e.g., the lack of channel state information at the transmitters. Moreover, note that
cooperating signals must represent the same information in order to
cooperate, which means that they cannot help the transmission of
other different messages. This may not be a good choice for the
all-source all-cast problem, where the messages to be transmitted
by any two nodes are not completely the same.

We will show that the rate \dref{achievable} is achievable for networks with some kind of symmetry, which include the network depicted in Fig. \ref{onereg} where the nodes are evenly spaced. In the following, we first develop a special version of the omnidirectional relay scheme for the all-source all-cast problem, where network topology is taken into consideration.

\begin{figure}[hbt]
\centering
\includegraphics[width=2.8in]{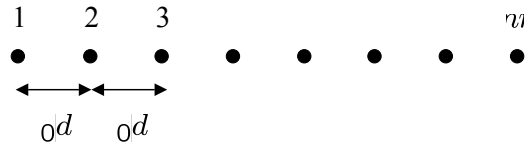}
\caption{A regular network.}
 \label{onereg}
\end{figure}

\subsection{A distance-regulated omnidirectional relay scheme}

We introduce the concept of $k$-hop neighbors in the network in
the following way. First, for each node $i$, define a set of nodes
in its neighborhood as its 1-hop neighbors, and denote the set as
$\NN_{i(1)}$. The way of defining 1-hop neighbors depends on the
network topology and will be specified later on for different
networks. If node $j$ is a 1-hop neighbor of node $i$, it is said
that $j$ can reach $i$ in one hop. If furthermore, $i$ is a 1-hop
neighbor of node $l$, then it is said that $j$ can reach $l$ in
two hops. Similarly, it can be said that a node can reach another
node in $k$ hops, for any positive integer $k$. Now, for each node
$i$, its $k$-hop neighbors is defined as the set of nodes that
can reach it in $k$ hops, but not in any less hops, and denote this
set as $\NN_{i(k)}$. Mathematically, $\NN_{i(k)}$ can be sequentially
defined as
\begin{eqnarray}
\NN_{i(k)}=\{j:j\in \NN_{l(1)} \mbox{ for some }l\in\NN_{i(k-1)}, \quad \label{Nkdef} \\
\mbox{ and } j\notin\{i\}\cup\NN_{i(1)}\cup \cdots\cup\NN_{i(k-1)}\}. \nonumber
\end{eqnarray}
It is clear that for any network of a finite number of nodes, there is a finite number $L_i$ for each $i\in\NN$, such that $\NN_{i(k)}=\emptyset$ for $k>L_i$.

We use block Markov coding. In block 1, each node $i$ transmits
its own message $w_i(1)$. At the end of block 1, each node $i$
decodes at least the messages sent by its 1-hop neighbors
$\{w_j(1):j\in\NN_{i(1)}\}$ (Maybe more can be decoded). In block 2,
each node $i$ transmits $\{w_i(2),w_{\NN_{i(1)}}(1)\}$ using the
binning technique, where for simplicity, $w_{\NN_{i(1)}}(1)$ stands
for $\{w_j(1):j\in \NN_{i(1)}\}$. At the end of block 2, each node
$i$ decodes at least the block-2 messages of its 1-hop neighbors and the block-1 messages of its 2-hop neighbors, i.e.,
$\{w_{\NN_{i(1)}}(2),w_{\NN_{i(2)}}(1)\}$. In block 3, each node $i$
transmits $\{w_i(3),w_{\NN_{i(1)}}(2),w_{\NN_{i(2)}}(1)\}$ using the
binning technique. Generally, in block $b$, each node $i$
transmits $\{w_i(b),w_{\NN_{i(1)}}(b-1),\ldots,w_{\NN_{i(b-1)}}(1)\}$
using the binning technique, and decodes at least
$\{w_{\NN_{i(1)}}(b),\ldots,w_{\NN_{i(b)}}(1)\}$ at the end of block
$b$, where, when the block number is large enough such that
$\NN_{i(b)}=\emptyset$, $w_{\emptyset}(l)=\emptyset$ for any $l\geq 1$.

Obviously, $\EE_{i(k)}$ corresponds to $\NN_{i(k)}$ in this special version, while $\DD_{i(k)}$ can be arbitrary as long as
$$
\EE_{i(1)}\cup\cdots\cup \EE_{i(k)}\subseteq  \DD_{i(1)}\cup\cdots\cup\DD_{i(k)},\quad \mbox{ for any } i\in \NN \mbox{ and } k\geq 1.
$$
In order to solve the all-source all-cast problem where each node needs to decode the messages of all the other nodes, for the networks to be discussed in Section \ref{netsym}, we will choose the 1-hop neighbor sets $\{\NN_{i(1)}:i\in\NN\}$ in a way such that for any $i\in\NN$,
\begin{equation}
\label{cover}
\bigcup_{k=1}^{L_i}\NN_{i(k)}=\NN\backslash\{i\}.
\end{equation}

To show that this scheme works for some networks, we start with a key technical lemma in next section, which discusses a multiple-access decoding based on multiple blocks.

\section{Key Technical Lemma: Multi-block Multiple-Access}
\label{lem}

Consider an AWGN multiple access channel
\begin{equation}
\label{multiacc}
Y(t)=\sum_{i\in \MM} X_i(t)+Z(t),
\end{equation}
where, $\MM=\{1,2,\ldots,m\}$ denotes the set of sources.

According to the well known multiple-access capacity region \cite[Ch.14]{covtho91}, a rate vector $(R_1,\ldots,R_m)$ is achievable if and only if the inequality
\begin{equation}
\label{k1}
\sum_{i\in {\cal S}}R_i<\log\left(1+\frac{\sum_{i\in {\cal S}}P_i}{N}\right)
\end{equation}
holds for all non-empty subsets ${\cal S}\subseteq {\cal M}$. Namely, if each source $i\in \MM$ encodes its message $w_i$ at rate $R_i$ with independent Gaussian block codewords $\b{X}_i(w_i)$ with power $P_i$, then \dref{k1} is the necessary and sufficient condition such that $\{w_1,w_2,\ldots,w_m\}$ can be decoded, in the sense that the decoding error can be made arbitrarily small by increasing the block length.

Obviously, \dref{k1} needs to hold for all nonempty $\SS\subseteq \MM$ in order to decode $\{w_1,w_2,\ldots,w_m\}$. However, it may not be so commonly recognized that as long as \dref{k1} holds for the one $\SS = \MM$, there must be some nonempty subset of $\{w_1,w_2,\ldots,w_m\}$ that can be decoded. This is formally stated as the following lemma.
\begin{lemma}
\label{lemma1}
For the multiple access channel \dref{multiacc}, with each source $i\in \MM$ sending a message $w_i$ at rate $R_i$ with power $P_i$, there always exists some nonempty subset of $\{w_1,w_2,\ldots,w_m\}$ that can be decoded, as long as the following inequality holds:
\begin{equation}
\label{k2}
\sum_{i\in {\cal M}}R_i<\log\left(1+\frac{\sum_{i\in {\cal M}}P_i}{N}\right)
\end{equation}
i.e., \dref{k1} with $\SS=\MM$.
\end{lemma}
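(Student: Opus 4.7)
The plan is to induct on $m$, the number of sources. For the base case $m=1$, the hypothesis \dref{k2} reduces to $R_1<\log(1+P_1/N)$, which is precisely the single-user Shannon condition for decoding $w_1$, so the lemma holds trivially.

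For the inductive step, I would assume the lemma holds for every AWGN MAC with fewer than $m$ sources and any noise variance, and take an $m$-source MAC satisfying \dref{k2}. There are two cases. If \dref{k1} holds for every nonempty $\SS\subseteq\MM$, then by the standard MAC achievability result quoted just before the lemma, the entire set $\{w_1,\ldots,w_m\}$ is decodable and we are done. Otherwise, there is some nonempty $\SS\subseteq\MM$ with $\sum_{i\in\SS}R_i\geq\log(1+\sum_{i\in\SS}P_i/N)$. Because \dref{k2} forbids $\SS=\MM$, this $\SS$ is a proper subset and its complement $\TT=\MM\backslash\SS$ is nonempty.

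Next I would pass to the sub-MAC formed by the sources in $\TT$, with the signals of $\SS$ absorbed into an effective Gaussian noise of variance $N+\sum_{i\in\SS}P_i$, and verify the sum-rate inequality
$$
\sum_{i\in\TT}R_i \;<\; \log\!\left(1+\frac{\sum_{i\in\TT}P_i}{N+\sum_{i\in\SS}P_i}\right).
$$
This drops out in one line by subtracting the violated bound for $\SS$ from \dref{k2} and using $\log(1+a)-\log(1+b)=\log\frac{1+a}{1+b}$. Since $|\TT|<m$, the inductive hypothesis applies to this reduced MAC and produces a nonempty subset of $\TT\subseteq\MM$ whose messages can be decoded, while the $\SS$-codewords are simply treated as noise.

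The one step worth flagging as the conceptual crux is the justification for replacing the $\SS$-signals by Gaussian noise in the reduced MAC. Because each source uses an independent Gaussian codebook of power $P_i$, the aggregate interference $\sum_{i\in\SS}X_i(t)$ is itself a zero-mean Gaussian process independent of the $\TT$-codewords and of $Z(t)$, so the effective channel seen by $\TT$ is genuinely AWGN with the claimed variance, and the inductive hypothesis is legitimately applicable. Past this point the argument is nothing more than successively peeling off a violated face of the MAC polymatroid, so no further estimates are required.
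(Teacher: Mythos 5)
Your proposal is correct and is essentially the paper's own argument: the paper runs the same recursion (find a violated subset $\A$, subtract its inequality from \dref{k2} to get a sum-rate condition on $\A^c$ with the $\A$-signals absorbed into the noise as $N_\A=\sum_{i\in\A}P_i+N$, and repeat until a decodable nonempty subset remains), merely phrased as an iterative peeling process rather than a formal induction on the number of sources. Your explicit flagging of the treat-interference-as-Gaussian-noise step is a fair tightening of a point the paper leaves implicit, but the route is the same.
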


\begin{proof}
We use a contradiction argument. Suppose \dref{k1} doesn't hold for some $\A\subset \MM$, i.e.,
\begin{equation}
\label{k3}
\sum_{i\in {\A}}R_i\geq\log\left(1+\frac{\sum_{i\in {\A}}P_i}{N}\right).
\end{equation}
Then taking the difference between \dref{k2} and \dref{k3}, we have
\begin{equation}
\label{k4}
\sum_{i\in \A^c}R_i<\log\left(1+\frac{\sum_{i\in \A^c}P_i}{N_{\A}}\right)
\end{equation}
where, $\A^c=\MM\backslash\A$, and $N_{\A}=\sum_{i\in \A}P_i+ N$. Now, by comparing \dref{k4} with \dref{k2}, we arrive at the same situation as \dref{k2} with $\MM$ replaced by $\A^c$, and $N$ replaced by $N_\A$. Similarly, if the inequality
\begin{equation}
\label{k5}
\sum_{i\in {\cal S}}R_i<\log\left(1+\frac{\sum_{i\in {\cal S}}P_i}{N_\A}\right)
\end{equation}
holds for all nonempty $\SS\subseteq \A^c$, then the subset of messages $\{w_i: \,i\in\A^c\}$ can be decoded; Otherwise, if \dref{k5} doesn't hold for some $\BB\subset\A^c$, the process can be continued with $\BB^c=\A^c\backslash \BB$. As the size of the subset decreases, we must be able to reach a nonempty subset where all the necessary inequalities of the type \dref{k5} hold, and thus the messages can be decoded. This is obvious, since if the process continues without stopping, it must reach a subset with only one source, and by then, the single inequality like \dref{k4} suffices for the decoding.

Therefore, we proved that if \dref{k2} holds, there must exist a nonempty subset $\MM_2\subseteq \MM$ such that $\{w_i:i\in \MM_2\}$ can be decoded, while $\{w_i:i\in \MM_1\}$ with $\MM_1=\MM\backslash \MM_2$ cannot.
\end{proof}

Now, in our block Markov coding setting with relays, the nodes help each other to transfer messages. To put into this perspective, let us consider a two-block decoding situation where in the first block $\{w_i(1):i\in \MM_2\}$ are decoded while $\{w_i(1):i\in \MM_1\}$ are not, and in the second block, each node $i\in\MM_2$ helps transmitting some messages from $\{w_i(1):i\in \MM_1\}$ besides its own message $w_i(2)$. The goal now is to decode $\{w_i(2):i\in \MM_2\}\cup\{w_i(1):i\in \MM_1\}$ at the end of the second block. In consistency with our notation earlier, denote $w_{\MM_1}(1)=\{w_i(1):i\in \MM_1\}$, $w_{\MM_2}(2)=\{w_i(2):i\in \MM_2\}$, and  $\{w_{\MM_2}(2),w_{\MM_1}(1)\}=\{w_i(2):i\in \MM_2\}\cup\{w_i(1):i\in \MM_1\}$.

Denote $\JJ_i\subset \MM$ as the set of nodes that node $i$ helps in the second block, i.e., node $i$ sends a codeword $\b{X}_i(w_i(2),w_{\JJ_i}(1))$ by binning the multiple messages in the second block. Reversely, denote $\II_i\subset \MM$ as the set of nodes that will help node $i$ to transmit $w_i(1)$ in the second block.

For any subset $\SS\subseteq \MM$, let $\SS_1=\SS\cap \MM_1$, and let
\begin{equation}
\label{ds}
\SS_2=(\SS\cap \MM_2)\cup(\bigcup_{i\in\SS_1}\II_i\cap\MM_2).
\end{equation}
That is, $\SS_2$ also consists of nodes from $\MM_2$ that may not be in $\SS$, but are helping transmitting $w_{\SS_1}(1)$.
Then, it can be easily verified with a typical sequence argument that $\{w_{\MM_2}(2),w_{\MM_1}(1)\}$ can be decoded if and only if for any nonempty subset $\SS\subseteq \MM$,
\begin{equation}
\label{k6}
\sum_{i\in {\cal S}}R_i<\log\left(1+\frac{\sum_{i\in {\SS_1}}P_i}{N}\right) +\log\left(1+\frac{\sum_{i\in {\SS_2}}P_i}{\sum_{i\in {\MM_1}}P_i+N}\right)
\end{equation}
where the first term is the contribution of the nodes in $\SS_1$ from the first block, and the second term is the contribution of the nodes in $\SS_2$ from the second block. Actually, it is rather instructive to think of the constraints \dref{k6} for all nonempty $\SS\subseteq \MM$ as a two-block multiple-access region. 

Although it is necessary that the inequality \dref{k6} should hold for all nonempty $\SS\subseteq \MM$ in order to decode $\{w_{\MM_2}(2),w_{\MM_1}(1)\}$, as in the case of one-block multiple-access discussed earlier, we will show that the following single inequality
\begin{equation}
\label{k7}
\sum_{i\in {\MM}}R_i<\log\left(1+\frac{\sum_{i\in {\MM_1}}P_i}{N}\right) +\log\left(1+\frac{\sum_{i\in {\MM_2}}P_i}{\sum_{i\in {\MM_1}}P_i+N}\right)
\end{equation}
i.e., \dref{k6} with $\SS= \MM$, is enough to ensure that some nonempty subset of $\{w_{\MM_2}(2),w_{\MM_1}(1)\}$ can be decoded.

We still use a contradiction argument. If \dref{k6} holds for all nonempty $\SS\subseteq \MM$, then $\{w_{\MM_2}(2),w_{\MM_1}(1)\}$ can be decoded; Otherwise, if for some nonempty $\A\subset \MM$, \dref{k6} doesn't hold, i.e.,
\begin{equation}
\label{k8}
\sum_{i\in {\A}}R_i\geq\log\left(1+\frac{\sum_{i\in {\A_1}}P_i}{N}\right) +\log\left(1+\frac{\sum_{i\in {\A_2}}P_i}{\sum_{i\in {\MM_1}}P_i+N}\right)
\end{equation}
then taking the difference between \dref{k7} and \dref{k8}, we have
\begin{equation}
\label{k9}
\sum_{i\in \A^c}R_i<\log\left(1+\frac{\sum_{i\in {\A_1^c}}P_i}{\sum_{i\in {\A_1}}P_i+ N}\right) +\log\left(1+\frac{\sum_{i\in {\A_2^c}}P_i}{\sum_{i\in {\A_2}}P_i+ \sum_{i\in {\MM_1}}P_i +N}\right)
\end{equation}
where, ${\A^c}=\MM\backslash \A$, ${\A_1^c}=\MM_1\backslash \A_1$, and ${\A_2^c}=\MM_2\backslash \A_2$. By the definition \dref{ds}, it simply follows that $\A\subseteq \A_1\cup\A_2$ and $\A^c\supseteq \A_1^c\cup\A_2^c$. Hence, by replacing $\A^c$ with $\A_1^c\cup\A_2^c$ in the left-hand-side of \dref{k9}, we have
\begin{equation}
\label{k10}
\sum_{i\in \A_1^c\cup\A_2^c}R_i<\log\left(1+\frac{\sum_{i\in {\A_1^c}}P_i}{\sum_{i\in {\A_1}}P_i+ N}\right) +\log\left(1+\frac{\sum_{i\in {\A_2^c}}P_i}{\sum_{i\in {\A_2}}P_i+ \sum_{i\in {\MM_1}}P_i +N}\right).
\end{equation}
This is the same situation as \dref{k7} with $\MM$ replaced by $\A_1^c\cup\A_2^c$, $\MM_1$ replaced by $\A_1^c$, $\MM_2$ replaced by $\A_2^c$, and some adjustment of the noises. Now, the messages to be decoded are $\{w_{\A_2^c}(2),w_{\A_1^c}(1)\}$. As in the case of one-block multiple-access discussed earlier, such a process can be continued until we find a nonempty subset of $\{w_{\MM_2}(2),w_{\MM_1}(1)\}$ that can be decoded.

Therefore, we proved that the inequality \dref{k7} alone ensures that there always exists a nonempty subset of $\{w_{\MM_2}(2),w_{\MM_1}(1)\}$ that can be decoded. Note that by combining the two terms on the right-hand-side, \dref{k7} becomes
\begin{equation}
\label{k11}
\sum_{i\in {\MM}}R_i<\log\left(1+\frac{\sum_{i\in {\MM}}P_i}{N}\right)
\end{equation}
which is exactly the same as \dref{k2}. In other words, the inequality \dref{k2} or \dref{k11} makes sure that there are always some messages that can be decoded, no matter whether it is one-block multiple-access, or two-block multiple access with relays.

It is now clear that generally we have the following conclusion for $K$-block multiple-access with relays.

\begin{lemma}
\label{keylemma}
Consider a $K$-block decoding situation where $\{w_{\MM_K}(K),\ldots,w_{\MM_1}(1)\}$ are to be decoded for some disjoint subsets $\MM_k,\,k=1,\ldots,K$ with $\bigcup_{k=1}^K\MM_k=\MM$, or equivalently to say, that $\{w_{\MM_k}(k-1),\ldots,w_{\MM_{k}}(1)\}$ have been decoded for $k=2,\ldots,K$. During each block $k=2,\ldots,K$, every node $i\in \MM_k$ helps transmitting a subset of $\{w_{\MM_{k-1}}(k-1),\ldots,w_{\MM_1}(1)\}$ besides its own message $w_i(k)$ with the binning technique. Then there is always a nonempty subset of $\{w_{\MM_K}(K),\ldots,w_{\MM_1}(1)\}$ that can be decoded if \dref{k11} holds.
\end{lemma}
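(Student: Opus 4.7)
The plan is to extend the two-block contradiction argument that immediately precedes the statement of the lemma to arbitrary $K$. First I would write out the full $K$-block multiple-access decoding region: by a joint typicality argument, the entire message collection $\{w_{\MM_K}(K),\ldots,w_{\MM_1}(1)\}$ can be decoded if and only if, for every nonempty $\SS\subseteq\MM$,
\[
\sum_{i\in\SS}R_i<\sum_{k=1}^{K}\log\left(1+\frac{\sum_{i\in\SS_k}P_i}{N+\sum_{j<k}\sum_{i\in\MM_j}P_i}\right),
\]
where each $\SS_k\subseteq\MM_k$ is built from $\SS\cap\MM_k$ together with any node of $\MM_k$ that, via its binning role in block $k$, is helping carry an undecoded message indexed by $\SS\cap\MM_j$ with $j<k$. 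This is the direct $K$-block generalization of \dref{k6} and of the two-block helper definition \dref{ds}. A routine telescoping of the $\SS=\MM$ version (where every $\SS_k$ equals $\MM_k$) collapses the right-hand side into $\log(1+\sum_{i\in\MM}P_i/N)$, which is exactly \dref{k11}.

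Next I would run the same contradiction loop as in the two-block proof. Suppose \dref{k11} holds but the above inequality fails for some proper $\A\subset\MM$. Subtracting the $\SS=\A$ bound from the $\SS=\MM$ bound and regrouping produces a strict inequality
\[
\sum_{i\in\A^c}R_i<\sum_{k=1}^{K}\log\left(1+\frac{\sum_{i\in\A_k^c}P_i}{N+\sum_{j<k}\sum_{i\in\MM_j}P_i+\sum_{i\in\A_k}P_i}\right),
\]
which is the $K$-block decoding inequality for the reduced ensemble with $\MM$ replaced by $\A_1^c\cup\cdots\cup\A_K^c$ and each noise floor inflated by the powers of the removed nodes in $\A_k$. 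The set inclusions $\A\subseteq\A_1\cup\cdots\cup\A_K$ and hence $\A^c\supseteq\A_1^c\cup\cdots\cup\A_K^c$, combined with monotonicity of the logarithm, promote this to the analogous bound on $\A_1^c\cup\cdots\cup\A_K^c$, returning the problem to the same form with strictly smaller index sets.

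Iterating this reduction, the size of the surviving message set strictly decreases at each step, so after finitely many rounds either the full inequality system holds on a nonempty subset (and the corresponding messages are jointly decodable by standard typicality arguments) or the recursion halts at a single source, for which the sole remaining inequality is automatically implied by \dref{k11}. Either way, a nonempty subcollection of $\{w_{\MM_K}(K),\ldots,w_{\MM_1}(1)\}$ is decoded. The step I expect to need the most care with is the bookkeeping around $\SS_k$: one must verify that after the reduction $\A\mapsto\A^c$ the surviving helper/binning structure still satisfies the defining property of $\SS_k$ in the restricted problem, so that the induction genuinely stays inside the class of ``$K$-block multiple-access with relays'' instances, and that the two terms produced by the subtraction regroup correctly across all $K$ blocks rather than just the first two. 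Once this bookkeeping is verified, the remainder is a straightforward transcription of the two-block argument.
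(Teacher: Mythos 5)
Your proposal is correct and takes essentially the same route as the paper: the paper proves the two-block case in detail (the region \dref{k6} with helper sets \dref{ds}, the subtraction step \dref{k7}--\dref{k10}, and termination), and then asserts Lemma \ref{keylemma} as its evident $K$-block generalization, which is exactly the generalization you spell out (the $K$-block region with per-block noise floors, telescoping of the $\SS=\MM$ constraint to \dref{k11}, blockwise subtraction, the inclusion $\A^c\supseteq\A_1^c\cup\cdots\cup\A_K^c$, and recursion down to a singleton). The bookkeeping you flag is handled in the paper only at the same level of assertion ("the same situation ... with some adjustment of the noises"), so your write-up is at least as complete as the paper's own argument.
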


\section{Networks with Symmetric Traffic}
\label{netsym}

After the discussion of last section, it is clear that no matter how complicated the relay situation is, at the end of each block $b$, every node $i$ can always decode the new messages $w(b)$ of a nonempty set of nodes $\GG_i(b)\subset \NN$, under the condition \dref{achievable}. (More detailed arguments about this will be presented in the proof of Theorem \ref{th1}). Then in order to successfully carry out the distance-regulated omnidirectional relay scheme presented in Section \ref{asac}, we only need to make sure that for each $i\in\NN$
\begin{equation}
\label{ms}
\NN_{i(1)}\subseteq \GG_i(b), \quad \mbox{ for all }b=1,\ldots,B.
\end{equation}

Due to the monotonicity of the power attenuation model \dref{gainmodel}, messages sent by nodes that are closer are generally easier to decode, and therefore it is natural to choose $\NN_{i(1)}$ as a set composed of the closest nodes. In view of the requirement \dref{ms}, it is preferable to put as few as possible nodes into $\NN_{i(1)}$. However, for the all-source all-cast problem, each $\NN_{i(1)}$ should contain sufficiently many nodes so that \dref{cover} holds, i.e., the whole network will be covered and each node will decode the messages of all the other nodes.

When we are sure that there are some nodes whose messages can be decoded but not knowing how many of them there are, it is not clear whether the messages of all the nodes in $\NN_{i(1)}$ can be decoded if $\NN_{i(1)}$ contains more than one nodes. However, there is a special situation where we can be sure, i.e., when there is some kind of symmetry to all the nodes in $\NN_{i(1)}$, in the sense that if one of them can be decoded, the others certainly can. Two simple examples of this are shown in Fig. \ref{symmetry}, where clearly, for any node $i$, the traffic is symmetric on both sides. For each node $i$, by choosing $\NN_{i(1)}$ as its two neighboring nodes, which are mostly easy to decode, it is certain that both of them can be decoded. Since \dref{cover} obviously holds by this definition, the all-source all-cast problem for these networks is solved under the condition \dref{achievable}.

\begin{figure}[hbt]
\centering
\includegraphics[width=4.5in]{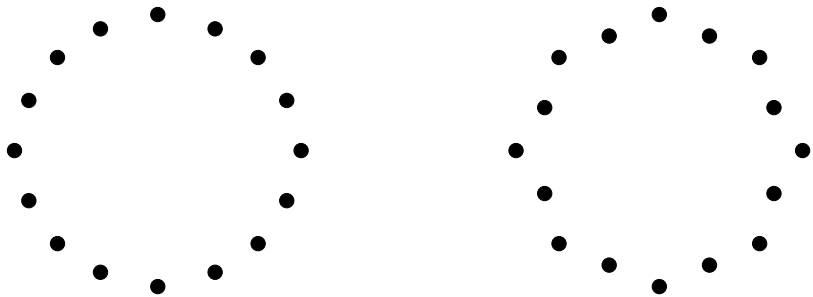}
\caption{Two symmetric networks.}
 \label{symmetry}
\end{figure}

The network depicted in Fig. 1 is not completely symmetric. For any node not in the center, i.e., $i\neq (n+1)/2$, the traffic on one side is heavier than the other side. However, there is still some kind of symmetry as we will show later on, so that any non-boundary node $i\notin \{1,n\}$ can decode the messages of both its neighbors $\{i-1,i+1\}$ simultaneously, under the condition \dref{achievable}.

We will first prove this for more general network topologies, and then the regular topology in Fig. \ref{onereg} will follow as a simple corollary. Alternatively, a direct proof for the regular network in Fig. 1 has been presented in \cite{xie08a}.

\begin{figure}[hbt]
\centering
\includegraphics[width=2.8in]{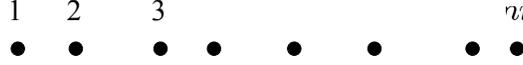}
\caption{A general one-dimensional network.}
\label{onedim}
\end{figure}

Consider a general wireless network of $n$ nodes located on a straight
line, labeled sequentially by $1,2,\ldots,n$, as depicted in Fig. \ref{onedim}. It is convenient to introduce the notation
$$
P_{i,j}=|g_{i,j}|^2 P
$$
for any $i,j\in \{1,2,\ldots,n\}$.

For any node $i\notin\{1,n\}$, let its 1-hop neighbors be
$\NN_{i(1)}=\{i-1,i+1\}$. Let $\NN_{1(1)}=\{2\}$ and
$\NN_{n(1)}=\{n-1\}$. We will show that the distance-regulated omnidirectional
relay scheme presented in Section \ref{asac} works for this one-dimensional network as long as the common rate $R$ satisfies \dref{achievable} and the following two symmetric sets of constraints for every $i\in \{2,3,\ldots,n-1\}$:
\begin{description}
\item[i)]
For any $\ell=1,\ldots,i-2$, at least one of the following two inequalities holds:
\begin{eqnarray}
(\ell+n-i)R&<&\log \left(1+\frac{P_{1,i}+\cdots+P_{\ell,i}+P_{i+1,i}+\cdots+P_{n,i}}{N} \right) \label{left1}  \\ \mbox{or} \quad \quad\quad
(n-i)R&<&\log \left(1+\frac{P_{i+1,i}+\cdots+P_{n,i}}{P_{1,i}+\cdots+P_{\ell,i}+N} \right) \label{left2}
\end{eqnarray}
\item[ii)]
For any $r=i+2,\ldots,n$,
at least one of the following two inequalities holds:
\begin{eqnarray}
(i+n-r)R&<&\log \left(1+\frac{P_{1,i}+\cdots+P_{{i-1},i}+P_{r,i}+\cdots+P_{n,i}}{N} \right) \label{right1}  \\ \mbox{or} \quad \quad\quad
(i-1)R&<&\log \left(1+\frac{P_{1,i}+\cdots+P_{{i-1},i}}{P_{r,i}+\cdots+P_{n,i}+N} \right) \label{right2}
\end{eqnarray}
\end{description}
In other words, we have the following theorem.

\begin{theorem}
\label{th1}
For the one-dimensional wireless network, a common rate $R$ is
achievable for the all-source all-cast problem with the
omnidirectional relay scheme, if it satisfies \dref{achievable}, and also for
every $i\in \{2,3,\ldots,n-1\}$, the above constraints i) and ii) hold.
\end{theorem}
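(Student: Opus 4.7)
The plan is to show, by induction on the block index $b=1,2,\ldots,B$, that at the end of block $b$ every non-boundary node $i$ successfully decodes the new messages of its two neighbors $\{i-1,i+1\}$ (and analogously the single neighbor for $i\in\{1,n\}$). Once this is in place, the distance-regulated scheme of Section~\ref{asac} propagates every source's information to every other node in at most $L_i$ hops, so the all-source all-cast problem is solved. The base case $b=1$ is contained in the general inductive step (with empty relay lists), so the real content is the inductive step.

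Fix a node $i\in\{2,\ldots,n-1\}$ and a block $b$. At the end of block $b$, node $i$ must jointly decode the multi-block message collection $\{w_{\NN_{i(1)}}(b),\ldots,w_{\NN_{i(b)}}(1)\}$, where each signal it receives is the binned codeword of another node carrying its own new message together with some already-decoded older messages of its own neighbors. This fits exactly the setting of Lemma~\ref{keylemma}: partition the transmitters according to the most recent block in which node $i$ still owes them a decoding. The rate constraint $(\ref{k11})$ at node $i$ is precisely $(n-1)R<\log\bigl(1+\sum_{j\neq i}P_{j,i}/N\bigr)$, which is $(\ref{achievable})$ evaluated at $i$. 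Lemma~\ref{keylemma} therefore guarantees a nonempty decodable subset $\GG_i(b)$; the job left is to upgrade ``nonempty'' to ``contains both $i-1$ and $i+1$.''

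To do this, I would rerun the contradiction argument inside the proof of Lemma~\ref{keylemma} but keep track of which indices remain. Suppose $i-1\notin\GG_i(b)$ after the recursion terminates; then at some stage of the recursion the ``failing'' subset $\A$ that caused the reduction must have excluded a contiguous block of left-side indices. Concretely, the monotonicity $|g_{j,i}|=g(d_{j,i})$ and the sequential labeling force every candidate failing subset at node $i$ to be of the form $\{1,\ldots,\ell\}\cup\{i+1,\ldots,n\}$ or its symmetric counterpart: if a farther node's rate sum can be absorbed, so can a nearer node's, so the surviving undecoded portion is always an ``interior interval'' around $i$. For each such $\ell\in\{1,\ldots,i-2\}$, the two alternatives produced by the contradiction argument are precisely $(\ref{left1})$ (the set $\{1,\ldots,\ell\}\cup\{i+1,\ldots,n\}$ is itself decodable, forcing $i-1\in\GG_i(b)$ after one more reduction step) or $(\ref{left2})$ (the right-side set $\{i+1,\ldots,n\}$ is decodable with the left-side acting as noise, whereupon a second application of Lemma~\ref{keylemma} after subtracting those already-decoded contributions from $Y_i$ shows the remaining left-side messages, including $w_{i-1}$, are decodable in turn). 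Symmetrically, constraints $(\ref{right1})$--$(\ref{right2})$ handle the case $i+1\notin\GG_i(b)$ via $r\in\{i+2,\ldots,n\}$.

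The main obstacle is the bookkeeping for the induction: at block $b$, messages from several earlier blocks are still in flight, and the ``disjoint subsets $\MM_k$'' structure required by Lemma~\ref{keylemma} has to be assembled cleanly from the previous-block decoding successes guaranteed by the inductive hypothesis (each $\MM_k$ collecting those nodes whose block-$k$ message is still owed). Once that accounting is correct, one verifies that for the reduced problem at each recursion stage the analogues of $(\ref{left1})$--$(\ref{right2})$ remain valid (they do, because removing an already-decoded interior interval only increases the effective SNR on both sides), and the inductive step closes. The boundary nodes $i\in\{1,n\}$ are easier, as they need to decode only one neighbor and the one-sided version of the argument applies directly.
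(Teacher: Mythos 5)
Your proposal follows essentially the same route as the paper's proof: reduce the theorem to showing that at the end of every block $b$ each node $i$ decodes $w_{\NN_{i(1)}}(b)$, invoke Lemma~\ref{keylemma} under \dref{achievable} to extract a nonempty set of decodable new messages, use the monotonicity of $g(\cdot)$ together with the relay structure to conclude that this set is an interval (or a pair of intervals) adjacent to $i$, and then use the constraint families i) and ii) to keep applying the lemma until the missing neighbor is decoded; the multi-block bookkeeping you flag as the main obstacle is handled in the paper exactly as you suggest, by grouping the still-owed messages into disjoint sets $\MM_k$ and re-applying Lemma~\ref{keylemma} after each successful decoding step.

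Two details in your sketch are crossed, however. First, the correspondence between the constraint families and the two one-sided cases is swapped: constraints i), indexed by $\ell\le i-2$ and involving the residual powers $P_{1,i}+\cdots+P_{\ell,i}$, presuppose that the adjacent left interval $\{\ell+1,\ldots,i-1\}$ --- in particular node $i-1$ --- has already been decoded, and they are what the paper uses to push the decoding until $w_{i+1}(b)$ is obtained; the case $i-1\notin\GG_i(b)$, where the entire left side $\{1,\ldots,i-1\}$ is still undecoded (a situation outside the index range $\ell\le i-2$), is the one handled by constraints ii), and vice versa. Relatedly, it is the \emph{decoded} new-message set that forms an interval hugging $i$, while the undecoded residual consists of far tails (and possibly one whole side), not an ``interior interval around $i$.'' Second, \dref{left1} does not render the whole residual set $\{1,\ldots,\ell\}\cup\{i+1,\ldots,n\}$ decodable in one shot; it is merely the sum-rate condition that feeds Lemma~\ref{keylemma}, which again returns only some nonempty subset, so the argument must iterate over successively smaller residuals --- this is precisely why the theorem assumes i) for \emph{every} $\ell$ (and ii) for every $r$) and why the paper applies the lemma ``continually'' until the missing neighbor's message appears. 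With these corrections your outline coincides with the paper's argument.
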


\begin{proof}
With the distance-regulated omnidirectional relay scheme, we only need to show that at the end of each block $b$, every node $i$ can decode $w_{\NN_{i(1)}}(b)$.

Obviously, for any node $i$ and at the end of any block $b$, there are a sequence of disjoint subsets  $\MM_k$, $k=1,\ldots,b$ (can be empty) with $\bigcup_{k=1}^b\MM_k=\NN\backslash\{i\}$ such that $\{w_{\MM_b}(b),w_{\MM_{b-1}}(b-1),\ldots,w_{\MM_1}(1)\}$ are to be decoded, or equivalently to say, that $\{w_{\MM_k}(k-1),\ldots,w_{\MM_{k}}(1)\}$ have been decoded for any $k=2,\ldots,b$ in the previous blocks. Then according to Lemma \ref{keylemma}, there must exist a nonempty subset of $\{w_{\MM_b}(b),w_{\MM_{b-1}}(b-1),\ldots,w_{\MM_1}(1)\}$ that can be decoded, due to
\begin{equation}
\label{on1}
(n-1)R<\log\left(1+\frac{P_{1,i}+\cdots+P_{{i-1},i}+P_{i+1,i}+\cdots+P_{n,i}}{N} \right)
\end{equation}
which follows from \dref{achievable}. If this nonempty subset is disjoint with $w_{\MM_b}(b)$, then after the decoding, we arrive at a similar situation with another sequence of disjoint $\MM'_k$, $k=1,\ldots,b$ with $\bigcup_{k=1}^b\MM_k=\NN\backslash\{i\}$. Then Lemma \ref{keylemma} can be applied again with \dref{on1} so that more messages can be decoded. This process can be continued as long as all nodes in $\NN\backslash \{i\}$ have messages to be decoded. In other words, finally, there must be a nonempty subset $\MM_b^*\subseteq \NN\backslash \{i\}$ such that $w_{\MM_b^*}(b)$ can be decoded at the end of block $b$.

According to the relay structure and the monotonicity of the power attenuation, $\MM_b^*$ can only be one of the following three types of subsets of nodes:
$\{\ell,\ldots,i-1\}$ for some $\ell<i$;
$\{i+1,\ldots,r\}$ for some $r>i$; or $\{\ell,\ldots,i-1,i+1,\ldots,r\}$ for some $\ell<i<r$. This is simply based on the observation that on either side, it is always easier to decode messages from nodes closer. If $i$ is a boundary node, i.e., $1$ or $n$, then only one of the first two types is possible and clearly $\NN_{i(1)}\subset\MM_b^*$. Now, for a non-boundary node $i\in\{2,3,\ldots,n-1\}$, all three types are possible. If $\MM_b^*$ is of the third type, then clearly, $\NN_{i(1)}\subset\MM_b^*$ and the proof is finished. If $\MM_b^*$ is of the first type, then Lemma \ref{keylemma} still can be applied with either \dref{left1} or \dref{left2} continually until $w_{i+1}(b)$ is decoded. Note that the case \dref{left2} is different from \dref{left1} in the sense that there is no intension to decode the messages of the nodes $\{1,\ldots,\ell\}$, and their transmissions are treated as noise.  Actually, they may not be all causing interference in all blocks, and hence, the condition needed to apply Lemma \ref{keylemma} may be weaker than \dref{left2}. Symmetrically, it can be shown that $w_{i-1}(b)$ will be decoded based on either \dref{right1} or \dref{right2} if $\MM_b^*$ is of the second type. Therefore, we've shown that $w_{\NN_{i(1)}}(b)$ will always be decoded. This concludes the proof.
\end{proof}

Now, we show that for the regular network in Fig. \ref{onereg}, any rate satisfying \dref{achievable} must satisfy the constraints i) and ii). Thus, the rate \dref{achievable} is achievable.

Due to the equal separation distance $d_0$ and the power gain model \dref{gainmodel}, it is convenient to define
$$
P_i=g(id_0)P \quad \mbox{ for any } i\geq 1.
$$
Then, $P_{i,j}=P_{|i-j|}$ for any $i\neq j$. According to the monotonicity of the function $g(\cdot)$, we have
\begin{equation}
\label{Porder}
P_1\geq P_2\geq \cdots \geq P_{n-1}.
\end{equation}
With this new notation, \dref{achievable} becomes
\begin{equation}
\label{achiev1}
R<\frac{1}{n-1}\log\left(1+\frac{P_1+P_2+\ldots+P_{n-1}}{N}\right)
\end{equation}
where, the total received power corresponds to any one of the boundary nodes, and is the smallest among all the nodes. The constraints i) and ii) become: For every $i\in \{2,3,\ldots,n-1\}$,
\begin{description}
\item[i)]
For any $\ell=1,\ldots,i-2$, at least one of the following two inequalities holds:
\begin{eqnarray}
(\ell+n-i)R&<&\log\left(1+\frac{\sum_{j=i-\ell}^{i-1}P_{j}+\sum_{j=1}^{n-i}P_{j}}{N} \right) \label{left1'}  \\ \mbox{or} \quad \quad\quad
(n-i)R&<&\log\left(1+\frac{\sum_{j=1}^{n-i}P_{j}}{\sum_{j=i-\ell}^{i-1}P_{j}+N} \right) \label{left2'}
\end{eqnarray}
\item[ii)]
For any $r=i+2,\ldots,n$,
at least one of the following two inequalities holds:
\begin{eqnarray}
(i+n-r)R&<&\log\left(1+\frac{\sum_{j=1}^{i-1}P_{j}+\sum_{j=r-i}^{n-i}P_{j}}{N} \right) \label{right1'}  \\ \mbox{or} \quad \quad\quad
(i-1)R&<&\log\left(1+\frac{\sum_{j=1}^{i-1}P_{j}}{\sum_{j=r-i}^{n-i}P_{j}+N} \right) \label{right2'}
\end{eqnarray}
\end{description}

Now, we verify that at least one of \dref{left1'} and \dref{left2'} must hold. First, note that by the concavity of the logarithmic function, it follows from \dref{Porder} and \dref{achiev1} that for any $1\leq k\leq n-1$,
\begin{equation}
\label{con1}
kR<\log\left(1+\frac{P_1+P_2+\ldots+P_{k}}{N}\right).
\end{equation}
Specially, when $k=i-1$, we have
\begin{equation}
\label{con2}
(i-1)R<\log\left(1+\frac{P_1+P_2+\ldots+P_{i-1}}{N}\right).
\end{equation}
If $i-\ell\leq n-i+1$, by \dref{Porder}, we have
$$
\sum_{j=i-\ell}^{i-1}P_{j}+\sum_{j=1}^{n-i}P_{j}\geq \sum_{j=1}^{n-i+\ell}P_{j}
$$
and thus, by \dref{con1} with $k=\ell+n-i$, \dref{left1'} holds. Otherwise, if $i-\ell> n-i+1$, we check the following inequality
\begin{equation}
\label{ch1}
\ell R<\log\left(1+\frac{\sum_{j=i-\ell}^{i-1}P_{j}}{N} \right).
\end{equation}
If \dref{ch1} holds, then by \dref{con2}, \dref{Porder} and the concavity of the logarithmic function, \dref{left1'} follows. Otherwise, if \dref{ch1} doesn't hold, i.e.,
\begin{equation}
\label{ch3}
\ell R\geq \log\left(1+\frac{\sum_{j=i-\ell}^{i-1}P_{j}}{N}\right),
\end{equation}
taking the difference between \dref{con2} and \dref{ch3}, we have
\begin{equation}
\label{ch4}
(i-1-\ell)R< \log\left(1+\frac{\sum_{j=1}^{i-\ell-1}P_{j}}{\sum_{j=i-\ell}^{i-1}P_{j}+N} \right).
\end{equation}
Then again by \dref{Porder} and the concavity of the logarithmic function, we have \dref{left2'}.

Similarly, by symmetry, we can show that at least one of \dref{right1'} and \dref{right2'} must hold. Therefore, we arrive at the following theorem.

\begin{theorem}
\label{th2}
For the one-dimensional regular wireless network in Fig. \ref{onereg}, the common rate \dref{achiev1} is
achievable for the all-source all-cast problem with the
omnidirectional relay scheme.
\end{theorem}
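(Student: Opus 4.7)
The plan is to reduce Theorem \ref{th2} to Theorem \ref{th1} by showing that, for the regular network in Fig. \ref{onereg}, any rate $R$ obeying \dref{achiev1} automatically satisfies the constraints i) and ii) of Theorem \ref{th1} in their regular-network forms \dref{left1'}--\dref{right2'}. Because the network is mirror-symmetric and $P_{i,j}=P_{|i-j|}$ depends only on the index difference, constraint ii) at node $i$ is obtained from constraint i) by swapping the roles of the two sides of $i$. So it suffices to prove i); ii) then follows by an identical argument with indices reflected by $j\mapsto n+1-j$.

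The two workhorses of the argument are the concavity of $\log(1+\cdot)$ and the monotonicity \dref{Porder}. First I would deduce from \dref{achiev1} and concavity the ``$k$-term'' bound \dref{con1}, valid for every $1\le k\le n-1$, and record \dref{con2} as the specialization $k=i-1$. These become the launching pads for everything that follows: \dref{con1} drives the ``easy'' regime where the indices in question are the $k$ largest $P_j$'s, while \dref{con2} drives the ``hard'' regime in which two competing blocks of indices must be separated into signal and interference.

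To verify constraint i) I would split on the geometric regime $i-\ell\le n-i+1$ versus $i-\ell>n-i+1$. In the first regime, the indices appearing on the right-hand side of \dref{left1'} are contained in a single block of $\ell+n-i$ consecutive positions, so by \dref{Porder} the sum of the corresponding $P_j$'s dominates $\sum_{j=1}^{n-i+\ell}P_j$, and \dref{con1} with $k=\ell+n-i$ closes the case. In the second regime, I would sub-case on whether the auxiliary \dref{ch1} holds: if it does, combining it with \dref{con2} together with \dref{Porder} and concavity yields \dref{left1'}; if it fails, the reversed \dref{ch3} subtracted from \dref{con2} produces \dref{ch4}, which after one more application of \dref{Porder} and concavity gives the alternative \dref{left2'}. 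In every case at least one of the two required inequalities holds, which is all that constraint i) demands.

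The main obstacle I anticipate is not any single log-manipulation but the index bookkeeping: ensuring in the ``overlapping'' regime $i-\ell>n-i+1$ that the two index blocks appearing in \dref{left1'} and the ``noise'' block appearing in the denominator of \dref{left2'} are correctly identified, and that every substitution allowed by \dref{Porder} preserves strict inequality with the right term count on each side. Once these invariants are pinned down the chain is short, and the symmetry argument disposes of ii) with no further work.
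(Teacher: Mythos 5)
Your proposal is correct and follows essentially the same route as the paper: reduce to Theorem \ref{th1}, derive \dref{con1}--\dref{con2} from \dref{achiev1} via concavity and \dref{Porder}, split on $i-\ell\le n-i+1$ versus $i-\ell>n-i+1$ with the sub-case on \dref{ch1} yielding either \dref{left1'} or \dref{left2'}, and dispose of constraint ii) by the mirror symmetry of the regular network. No substantive differences from the paper's argument.
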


\begin{figure}[hbt]
\centering
\includegraphics[width=2.8in]{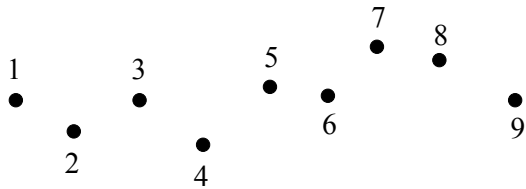}
\caption{A general network with nodes clearly ordered by distance.}
\label{oneorder}
\end{figure}
\begin{figure}[hbt]
\centering
\includegraphics[width=2.8in]{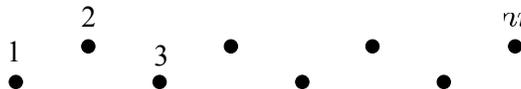}
\caption{A regular network with a clear ordering of nodes by distance.}
\label{oneorderreg}
\end{figure}

\begin{remark}
In all the arguments above, obviously, it is not necessary for all the nodes to be located on a straight line, as long as they can be clearly ordered in terms of the distances, i.e., there is a way of labeling the nodes so that $d_{i,j}\leq d_{i,k}$ for any $i<j<k$, or $k<j<i$. One such example is shown in Fig. \ref{oneorder}, and a regular case is shown in Fig. \ref{oneorderreg}. In such cases, Theorem \ref{th1} or \ref{th2} still applies.
\end{remark}

\section{Conclusion}

We developed an omnidirectional relay scheme for wireless networks with multiple sources, where each node can simultaneously relay multiple messages in different directions by binning them into a single signal. This scheme also exploits the broadcast nature of wireless communication, such that one node helps multiple nodes, and multiple nodes help one node. In the extreme, this scheme is capable of completely eliminating interference in the whole network, and specially, for the all-source all-cast problem where all messages are of interest, each node can benefit from the signals transmitted by all the other nodes. We also demonstrated some kind of optimality of this scheme by showing that it achieves the maximum rate possible for some networks if no beamforming is performed.

We proposed a distance-regulated networking framework, which was shown to work well for some networks. To deal with more general problems, the neighborhoods can be selected not only based on the topology, but also on other factors such as the communication rates, interference, etc. It is also possible to make the omnidirectional relay framework presented in Section \ref{scheme} more general by introducing layered coding structure at each node. This will admit superposition coding for beamforming, and will also make it possible to transmit different messages for different nodes, as in the basic scheme for the broadcast channel \cite{cov72}. Much remains to be done.

\label{conclu}

\bibliographystyle{ieeetr}
\baselineskip=1.0\normalbaselineskip 
\bibliography{bibfile}

\end{document}